\documentclass{tlp}

\pagestyle{plain}
\usepackage{amsmath,amsfonts,amssymb}
\usepackage{xypic}
\usepackage[utf8]{inputenc}
\usepackage{maybemath}
\usepackage{multirow}
\newcommand{\comment}[1]{}
\usepackage{url}

\hyphenation{Schnoebelen}

\newcommand{\Conf}{\mathit{Conf}}

\newcommand{\code}[1]{{\tt #1}}

\newcommand{\HLnofunctors}{\ensuremath{C}}
\newcommand{\HLfunctors}{\ensuremath{F}}

\newcounter{mycounter}
\newtheorem{mylemma}[mycounter]{Lemma}

\newtheorem{theorem}{Theorem}[section]

\newtheorem{corollary}[theorem]{Corollary}

\newtheorem{lemma}[theorem]{Lemma}
\newtheorem{definition}[theorem]{Definition}

\newtheorem{example}[theorem]{Example}

\title{Decidability properties for fragments of CHR}

\author[M. Gabbrielli et.al.]{
        MAURIZIO GABBRIELLI\\
        Dipartimento di Scienze dell’Informazione and Lab. Focus INRIA, Universit\`a di Bologna\\
        \email{gabbri@cs.unibo.it}\\
        \and
        JACOPO MAURO\\
        Dipartimento di Scienze dell'Informazione, Universit\`a  di Bologna\\
        \email{jmauro@cs.unibo.it}\\
        \and
        MARIA CHIARA MEO\\
        Dipartimento di Scienze, Universit\`a di Chieti Pescara
\\
\email{cmeo@unich.it}        \and
        JON SNEYERS\\
        Departement Computerwetenschappen, K.U.Leuven \\
        \email{jon.sneyers@cs.kuleuven.be}
}
\comment{
\author{Maurizio Gabbrielli \inst{1} \and Jacopo Mauro \inst{1} \and Maria Chiara Meo \inst{2} \and Jon Sneyers \inst{3}}
\institute{
Dipartimento di Scienze dell'Informazione, Universit\`a  di Bologna \\
\texttt{gabbri | jmauro@cs.unibo.it}
\and
Dipartimento di Scienze, Universit\`a di Chieti Pescara}
\\
\texttt{cmeo@unich.it}
\and
Departement Computerwetenschappen, K.U.Leuven \\
\texttt{jon.sneyers@cs.kuleuven.be}
}

\begin{document}

\maketitle

\begin{abstract}
We study the decidability of termination for two CHR dialects which, similarly to the Datalog like
languages, are defined by using a signature which does not allow function symbols (of arity $>0$). Both languages
allow the use of the $=$ built-in in the body of rules, thus are built on a host language that supports unification.
However each imposes one further restriction. The first CHR dialect allows only {\em range-restricted} rules,
that is, it does not allow the use of variables in the body or in the guard of a rule if they do not appear in the head.
We show that the existence of an infinite computation is decidable for this dialect. The second dialect instead limits 
the number of atoms in the head of rules to one. We prove that in this case, the existence of a terminating
computation is decidable. These results show that both dialects are strictly less expressive\footnote{As we clarify later, ``less expressive'' here means that there exists no termination preserving encoding of Turing machines in the considered language.}  than Turing
Machines. It is worth noting that the language (without function symbols) without these restrictions is as expressive as Turing Machines.
\end{abstract}

\begin{keywords}
Constraint programming, Expressivity, Well-structured transition systems.  \end{keywords}

\section{Introduction}

Constraint Handling Rules (CHR) \cite{fru_chr_overview_jlp98,fru_chr_2009} is a
declarative general-purpose language. A CHR program consists of a set of multi-headed guarded
(simplification, propagation and simpagation) rules which allow one
to rewrite constraints into simpler ones until a solved form is
reached. The language is parametric w.r.t. an underlying constraint theory $\mathcal{CT}$
which defines %
basic built-in constraints. For a recent survey on the language
see \citeN{newsurvey}.

In the last few years, several papers have %
investigated
the expressivity of CHR,
however very few decidability results for fragments of CHR have been obtained. Three main aspects
affect the computational power of CHR: the number of atoms allowed in the heads, the nature of the
underlying signature on which programs are defined, and the constraint theory. %
The latter two aspects are often referred to %
as
the ``host language''
since they identify the language on which a CHR system is built.
Some results in \cite{DBLP:conf/sofsem/GiustoGM09} indicate that %
restricting to single-headed
rules decreases the computational power of CHR. However, these results consider Turing complete
fragments of CHR, hence they do not establish any decidability result. Indeed, single-headed CHR
is Turing-complete  \cite{DBLP:conf/sofsem/GiustoGM09}, provided that the host language allows
functors and %
unification. On the other hand, when allowing multiple heads, even
restricting to a host language which allows only constants does not allow to obtain any
decidability property, since even with this limitation CHR is Turing complete
\cite{DBLP:conf/iclp/Sneyers08,DBLP:conf/sofsem/GiustoGM09}. The only (implicit) decidability results
concern propositional CHR, where all constraints have arity 0, and %
CHR without functors and without unification, since these languages can be
translated to (colored) Petri Nets \cite{betz_petri_nets_chr07}
--- see also %
Section \ref{sec:conclusions}.

Given this situation, when looking for decidable properties it is natural to consider
further restrictions of the above mentioned CHR language which allows the only built-in $=$ (interpreted in the usual way as equality on the Herbrand universe) and which, similarly to Datalog, is defined over a signature which contains no function symbol of arity $> 0$. We denote such a language by CHR($\HLnofunctors$).

In this paper we provide two decidability results for two fragments of CHR($\HLnofunctors$). The first fragment allows {\em range-restricted} rules only, that is, it does not allow the use of a variable in the body or in the guard if it  does not appear in the head. We show, using the theory of well-structured transition systems \cite{FS01,DBLP:conf/lics/AbdullaCJT96},  that in this case the existence of an infinite computation is decidable.
The second fragment that we consider is single-headed CHR($\HLnofunctors$), denoted by CHR$_1$($\HLnofunctors$).   We prove that, for this language, the existence of a terminating computation is decidable. In this case we provide a direct proof, since no reduction to Petri Nets can be used (the language introduces an infinite states system) and well-structured transition system can not be used (they do not allow to prove this kind of decidability properties).

These results show that both CHR fragments are strictly less expressive than Turing Machines. As previously mentioned, CHR($\HLnofunctors$) is as expressive as Turing Machines. So these results obviously imply that both restrictions lower the expressive power of CHR($\HLnofunctors$).

\section{Syntax and semantics}
In this section we give an overview of CHR syntax and its operational semantics
following \cite{fru_chr_overview_jlp98,duck_stuck_garc_holz_refined_op_sem_iclp04}. A constraint $c(t_1,\dots,t_n)$ is an atomic formula constructed on
a given signature $\Sigma$ in the usual way.  There are two types of constraints: built-in constraints
(predefined) that are handled by an existing solver and CHR constraints (user-defined) which are defined
by a CHR program. Therefore we assume that the signature $\Sigma$ contains two disjoint sets of predicate
symbols for built-in and  CHR constraints. For built-in constraints we assume that a first order
decidable theory $\mathcal{CT}$ is given which describes their meaning. Often the terminology
``host language'' is used to indicate the language consisting of the built-in predicates,
because indeed often CHR is implemented on top of such
an existing host language.

To distinguish between different occurrences of syntactically equal constraints, CHR constraints are extended with a unique identifier. An identified CHR constraint is denoted by $c \# i$ with $c$ a CHR constraint and $i$ the identifier. We write $\code{chr}(c\#i) = c$ and $\code{id}(c\#i) = i$, possibly extended to sets and sequences of identified CHR constraints in the obvious way.

\label{subsection:CHR_program}
A CHR program is defined as a sequence of three kinds of rules: simplification, propagation and simpagation rules. Intuitively, simplification rewrites constraints into simpler ones, propagation adds new constraints which are logically redundant but may trigger further simplifications,
and simpagation combines in one rule the effects of both propagation and simplification rules.
For simplicity we consider simplification and propagation rules as special cases of a simpagation rule. The general form of a simpagation rule is:
$${\it r}\ @ \  H^k \ \backslash \ H^h \ \Longleftrightarrow g \mid B$$
where ${\it r }$ is a
unique identifier of a rule, $H^k$ and $H^h$ (the heads) are multi-sets of CHR constraints,
$g$ (the guard) is a %
conjunction of built-in
constraints and $B$ is a %
multi-set of (built-in and
user-defined) constraints. If $H^k$ is empty then the rule is a simplification rule. If $H^h$ is empty then the rule is a propagation rule. At least one of $H^k$ and $H^h$ must be non-empty.
When the guard $g$ is empty or $true$ we omit  $g  \,|$.
The names of rules are omitted when not needed. For a simplification rule we omit $H^k \backslash$ while we write a propagation rule as $H^k \Longrightarrow g \mid B$.
A CHR {\em goal} is a multi-set of (both user-defined and built-in) constraints.

We also use the following notation: $\exists_{V} \phi$, where $V$ is a set of variables, denotes the
existential closure of a formula $\phi$ w.r.t. the variables in $V$, while $\exists_{-V} \phi$ denotes the existential closure of a formula $\phi$ with the exception of the variables in $V$ which remain unquantified. $Fv(\phi)$ denotes the free variables appearing in $\phi$ and $t \sigma$ the application of a substitution $\sigma$ to a syntactic object $t$.

\medskip
{\bf CHR dialects.}\label{secdialectes} As mentioned before, the computational power of CHR depends on
several aspects, including the number of atoms allowed in the heads, %
the underlying
signature $\Sigma$ on which programs are defined, and the constraint theory $\mathcal{CT}$,
defining the %
built-ins.
We use the notation CHR($X$), where the parameter $X$ indicates the
signature and the constraint theory (in other words, the host language).

More precisely, the language under consideration in this paper is  CHR($\HLnofunctors$) and has been defined in the introduction.
We will also use the notation CHR($P$) to denote {\em propositional} CHR, that is the language where all
constraints have arity zero. This corresponds to consider a trivial host language without any data type.
Finally CHR($\HLfunctors$) indicates the (usual) CHR language which allows functor symbols and the $=$ built-in.
Thus in this case the host language allows arbitrary Herbrand terms and supports unification among them.

The number of atoms in the heads also affects the expressive power of the language.
We use the notation CHR$_1$, possibly combined with the notation above, to denote
{\em single-headed} CHR, where heads of rules contain one atom.

\medskip
{\bf Operational semantics of CHR.}
\label{subsection:traditional_operational_semantics}
We consider the theoretical operational semantics, denoted by $\omega_t$ and the abstract semantics,
denoted by $\omega_o$. The semantics  $\omega_t$ is given by \citeN{duck_stuck_garc_holz_refined_op_sem_iclp04}
as a state transition system $T= (\Conf, \stackrel{\omega_t}{\rightarrow_P})$ where configurations in $\Conf$
are tuples of the form $\langle G,S,B,T \rangle_n$, where
$G$ is the goal (a multi-set of constraints that remain to be solved), $S$ is the CHR store (a set of identified CHR constraints), $B$ is the built-in store (a conjunction of built-in constraints), $T$ is the propagation history (a sequence of identifiers used to store the rule instances fired) and $n$ is the next free identifier (it is used to identify new CHR constraints).
The transitions of $\omega_t$ are shown in Table \ref{table:traditional_semantics}.

Given a program $P$, the transition relation $\stackrel{\omega_t}{\rightarrow_P}
\subseteq \Conf \times \Conf$ is the least relation
satisfying the rules  in Table \ref{table:traditional_semantics}. The {\bf Solve} transition allows to update the constraint store by taking into account a built-in constraint contained in the goal. The {\bf Introduce} transition is used to move a user-defined constraint from the goal
to the CHR constraint store, where it can be handled by applying CHR rules.  The {\bf Apply} transition allows to rewrite user-defined constraints (which are in the CHR constraint store)
using rules from the program.
The {\bf Apply} transition is applicable when the current built-in store ($B$) %
entails
the guard of the rule ($g$).

\begin{table}
\begin{center}
\begin{description}
 \item[Solve]
  $\langle \{c\} \uplus G, S, B, T\rangle_n \stackrel{\omega_t}{\rightarrow_P}
  \langle G, S,c \wedge B, T\rangle_n$ where $c$ is a built-in constraint
 \item[Introduce]
  $\langle \{c\} \uplus G, S, B, T\rangle_n \stackrel{\omega_t}{\rightarrow_P}
  \langle G, \{c\#n\} \cup S,B, T\rangle_{n+1}$ where $c$ is a CHR constraint
 \item[Apply]
  $\langle G, H_1 \cup H_2 \cup S, B, T\rangle_n \stackrel{\omega_t}{\rightarrow_P}
  \langle C \uplus G, H_1 \cup S, \theta \wedge B, T \cup \{t\} \rangle_n$ where $P$ contains a (renamed apart) rule
  ${\it r}\ @ H'_1 \backslash H'_2 \Longleftrightarrow g \mid C$
  and there exists a matching substitution $\theta$ s.t. $\code{chr}(H_1) = H'_1 \theta$, $\code{chr}(H_2) = H'_2 \theta$, $\mathcal{CT} \models B \rightarrow \exists_{-Fv(B)}(\theta \wedge g)$
  \\ and $t = \code{id}(H_1) \mathrel+\joinrel\mathrel+ \code{id}(H_2) \mathrel+\joinrel\mathrel+ [r] \notin T$
\end{description}

\end{center}
\caption{Transitions of $\omega_t$}
\label{table:traditional_semantics}
\end{table}

An {\em initial configuration} has the form
$\langle G,\emptyset,true, \emptyset \rangle_1$
while a  {\em final configuration} has either the form
$\langle G,S,false, T \rangle_k$
when it is {\em failed},
or  the form
$\langle \emptyset,S,B, T \rangle_k$ when it is successfully terminated because there are
no applicable rules.
A computation is called {\em terminating} if it ends in a final configuration, {\em infinite} otherwise.

\label{subsection:original_operational_semantics}

The first CHR operational semantics defined in  \cite{fru_chr_overview_jlp98} differs from the traditional
semantics $\omega_t$. Indeed this original, so called,  abstract semantics denoted by $\omega_o$, allows
the firing of a propagation rule an infinite number of times. For this reason $\omega_o$ can be seen as
the abstraction of the traditional semantics where the propagation history is not considered.
It is identical to $\omega_t$, except that configurations are
of the form $\langle G,S,B \rangle_n$ (they do not contain a propagation history)
and the {\bf Apply} transition does not have the last condition that $t \not\in T$.

\comment{
An {\em initial configuration} has the form
$\langle G,\emptyset,true \rangle_1$
while a  {\em final configuration} has either the form
$\langle G,S,false \rangle_k$
when it is {\em failed},
or  the form
$\langle \emptyset,S,B\rangle_k$ when it is successfully terminated because there are
no applicable rules.
}

\comment{
\begin{table}
\begin{center}
\begin{description}
 \item[Solve]
  $\langle \{c\} \uplus G, S, B\rangle_n \stackrel{\omega_o}{\rightarrow_P}
  \langle G, S,c \wedge B\rangle_n$ where $c$ is a built-in constraint
 \item[Introduce]
  $\langle \{c\} \uplus G, S, B\rangle_n \stackrel{\omega_o}{\rightarrow_P}
  \langle G, \{c\#n\} \cup S,B\rangle_{n+1}$ where $c$ is a CHR constraint
 \item[Apply]
  $\langle G, H_1 \cup H_2 \cup S, B\rangle_n \stackrel{\omega_o}{\rightarrow_P}
  \langle C \uplus G, H_1 \cup S, \theta \wedge B\rangle_n$ where $P$ contains a (renamed apart) rule
  ${\it r}\ @ H'_1 \backslash H'_2 \Longleftrightarrow g \mid C$
  and there exists a matching substitution $\theta$ s.t. $\code{chr}(H_1) = \theta H'_1$, $\code{chr}(H_2) = \theta H'_2$, $\mathcal{CT} \models B \rightarrow \exists_{-Fv(B)}(\theta \wedge g)$
\end{description}

\end{center}
\caption{Transitions of $\omega_o$}
\label{table:original_semantics}
\end{table}
}

\section{Range-restricted CHR($\HLnofunctors$)}\label{sec:rr}

In this section we consider the (multi-headed) range-restricted CHR($\HLnofunctors$) language described in the introduction. We call a CHR rule range-restricted if all the variables which appear in the body and in the guard appear also in the head of a rule. More formally, if $Var(X)$ denotes the variables used in $X$, the rule ${\it r}\ @ H^k \backslash H^h \Longleftrightarrow g \mid B$ is range-restricted if $Var(B)\cup Var(g) \subseteq Var (H^k \backslash H^h)$ holds. A CHR language is called range-restricted if it allows range-restricted rules only.

We prove that in range-restricted CHR($\HLnofunctors$) the existence of an infinite computation is a decidable
property when considering the $\omega_o$ semantics. This shows that  this language is less expressive than Turing Machines and
than CHR($\HLnofunctors$). Our result is based on the theory of well-structured transition systems (WSTS) and we refer to \cite{FS01,DBLP:conf/lics/AbdullaCJT96} for this theory. Here we only provide the basic definitions on WSTS, taken from \cite{FS01}.

Recall that a {\em quasi-order} (or, equivalently, preorder) is a reflexive and transitive relation. A {\em well-quasi-order} (wqo) is defined as a quasi-order $\leq$ over a set $X$
such that, for any infinite sequence $x_0, x_1, x_2, \ldots$
in $X$, there exist indexes $i<j$ such that $x_i\leq x_j$.

A {\em transition system} is defined as usual, namely it is a structure $TS = (S,\rightarrow)$, where $S$ is a set of {\em states} and $\rightarrow\subseteq S\times S$
is a set of {\em transitions}.  We define $Succ(s)$ as the set $\{s'\in S\mid s\rightarrow s'\}$ of immediate successors of $s$.
We say that  $TS$ is {\em finitely branching} if, for each $s \in S$,  $Succ(s)$ is finite.
Hence we have the key definition.

\begin{definition}[Well-structured transition system with strong compatibility]\label{def:wsts}
A {\em well-structured transition system with strong compatibility}
is a transition system
$TS = (S,\rightarrow)$, equipped with a quasi-order $\leq$ on
$S$, such that the two following conditions hold:
\begin{enumerate}
\item $\leq$ is a well-quasi-order;
\item $\leq$ is strongly (upward) compatible with $\rightarrow$, that is, for all $s_1\leq t_1$ and all transitions $s_1\rightarrow s_2$,
there exists a state $t_2$ such that $t_1\rightarrow t_2$ and
$s_2\leq t_2$ holds.
\end{enumerate}
\end{definition}

The next theorem is a special case of a result in \cite{FS01} and
will be used to obtain our decidability result.

\begin{theorem} \label{divdec}
Let $TS = (S,\rightarrow,\leq)$ be a finitely branching,
well-structured transition system
with strong compatibility, decidable $\leq$ and computable $Succ(s)$ for $s\in S$.
Then the existence of an infinite computation starting from
a state $s\in S$ is decidable.
\end{theorem}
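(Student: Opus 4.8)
The plan is to decide non-termination from $s$ by constructing a \emph{finite} prefix of the reachability tree rooted at $s$ and then inspecting it. The reachability tree has $s$ as its root, and the children of any node $x$ are exactly the elements of $Succ(x)$; since $Succ$ is computable and each $Succ(x)$ is finite, the tree is effectively and finitely branching. An infinite computation starting from $s$ is, by definition, an infinite path in this tree, so by K\"onig's Lemma the question ``is there an infinite computation from $s$?'' is equivalent to ``is the reachability tree infinite?''. The reachability tree itself may of course be infinite, so I would not build it outright; instead I build it depth-first but \emph{prune} a node $x_j$ --- declaring it a leaf and not expanding it --- as soon as it \emph{dominates} one of its proper ancestors, i.e.\ as soon as there is an ancestor $x_i$ (with $i<j$ on the path from the root) such that $x_i \leq x_j$. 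Because $\leq$ is decidable and every node has only finitely many ancestors, this pruning test is effective.

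First I would argue that this construction always terminates, producing a finite tree $\mathcal{T}$. Suppose not: then $\mathcal{T}$ is an infinite, finitely branching tree, so by K\"onig's Lemma it contains an infinite branch $x_0, x_1, x_2, \ldots$. Since $\leq$ is a well-quasi-order, there exist indices $i<j$ with $x_i \leq x_j$. But then the construction would have pruned $x_j$ (it dominates its ancestor $x_i$), so $x_j$ could not have a child $x_{j+1}$ on the branch --- a contradiction. Hence $\mathcal{T}$ is finite and, given the effectiveness of $Succ$ and of the test $\leq$, computable.

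It then remains to show that $s$ admits an infinite computation \emph{iff} $\mathcal{T}$ contains a pruned node, which is a decidable condition on the finite tree $\mathcal{T}$. For the direction that matters, suppose $\mathcal{T}$ contains a node $x_j$ pruned because $x_i \leq x_j$ for a proper ancestor $x_i$, and let $x_i \rightarrow x_{i+1} \rightarrow \cdots \rightarrow x_j$ be the path between them. Strong compatibility lets me ``replay'' this path from the larger state: from $x_i \leq x_j$ and $x_i \rightarrow x_{i+1}$ I obtain $x_j \rightarrow y_{i+1}$ with $x_{i+1} \leq y_{i+1}$, and iterating along the whole segment yields a state $y_j$ with $x_j \rightarrow \cdots \rightarrow y_j$ and $x_i \leq x_j \leq y_j$. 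Repeating this simulation indefinitely produces an infinite computation extending the finite path from $s$ to $x_i$; hence $s$ has an infinite computation. Conversely, if $\mathcal{T}$ has no pruned node, then every leaf of $\mathcal{T}$ was declared a leaf only because it has no successors, so $\mathcal{T}$ coincides with the entire reachability tree of $s$; being finite, it contains no infinite path, so every computation from $s$ terminates.

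The main obstacle is the interplay of the two WSTS hypotheses across the two directions: the well-quasi-order is exactly what guarantees, via K\"onig's Lemma, that the pruning construction halts and so renders the problem decidable, while strong compatibility is exactly what upgrades a single ``domination'' pair $x_i \leq x_j$ into a genuine infinite run rather than a mere syntactic loop. Getting the replay argument right --- that strong compatibility can be iterated along an entire path and then repeated ad infinitum to yield a true infinite sequence of transitions --- is the delicate point; the remaining effectiveness checks (computing $Succ$, testing $\leq$, scanning finitely many ancestors) are routine given the stated hypotheses.
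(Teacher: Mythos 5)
Your proof is correct, and it is essentially the argument behind the result the paper imports: the paper gives no proof of this theorem, citing it as a special case of Finkel and Schnoebelen's decidability results for WSTS, and your finite reachability tree with subsumption pruning (wqo plus K\"onig's Lemma for finiteness of the pruned tree, strong compatibility to replay a subsumed loop into a genuine infinite run) is precisely the standard proof given in that reference.
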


\medskip
{\bf Decidability of divergence.}
Consider a given goal $G$ and a (CHR) program $P$ and consider the  transition system
$T= (\Conf,\stackrel{\omega_o}{\rightarrow_P})$ defined in Section \ref{subsection:original_operational_semantics}.
Obviously the number of constants and variables appearing in $G$ or in $P$ is finite. Moreover, observe that since we consider range-restricted programs, the application of the transitions $\stackrel{\omega_o}{\rightarrow_P}$ does not introduce new variables in the computations. In fact, even though rules are renamed (in order to avoid clash of variables),
the definition of the Apply rule %
(in particular the definition of $\theta$) implies that in a transition $s_1 \stackrel{\omega_o}{\rightarrow_P} s_2$
we have that $Var(s_2)\subseteq Var(s_1)$ holds. Hence an obvious inductive argument implies that no new
variables arise in computations. For this reason, given a goal $G$ and a program $P$, we can assume that the
set $\Conf$ of all the configurations uses only a finite number of constants and variables.
In %
the following
we implicitly make this assumption. %
We  define a quasi-order on configurations as follows.

\begin{definition}
Given two configurations $s_1 = \langle G_1,S_1,B_1 \rangle_i$ and $s_2 = \langle G_2,S_2,B_2 \rangle_j$ %
we say that $s_1 \leq s_2$ if
\begin{itemize}
 \item
  for every constraint $c \in G_1$ $|\{c \in G_1\}| \leq |\{c \in G_2\}|$
 \item
  for every constraint $c \in \{d \ . \ d\#i \in S_1\}$ $|\{i \ . \ c\#i \in S_1\}| \leq |\{i \ . \ c\#i \in S_2\}|$
 \item
  $B_1$ is logically equivalent to $B_2$
\end{itemize}
\end{definition}

The next Lemma, with proof in \cite{tech_report}, states the relevant property of $\leq$.

\begin{lemma}\label{lemma:wqo}
$\leq$ is a well-quasi-order on $\Conf$.
\end{lemma}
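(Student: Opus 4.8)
The plan is to reduce the claim to Dickson's Lemma by exploiting the finiteness of the underlying alphabet of constraints. The crucial observation, already anticipated in the text, is that since CHR($\HLnofunctors$) contains no function symbols of arity $>0$ and since, for a fixed goal $G$ and program $P$, only finitely many predicate symbols, constants and variables can occur in any reachable configuration, there are only \emph{finitely many} distinct atomic constraints. Indeed every constraint has the form $c(t_1,\dots,t_n)$ with $c$ ranging over a finite set of predicate symbols, $n$ bounded by the maximal arity, and each $t_i$ drawn from the finite set of available constants and variables; hence the set $\mathcal{A}$ of all such constraints (both built-in and user-defined) is finite, say $|\mathcal{A}| = m$.

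First I would handle the goal and store components. The goal $G$ of a configuration is a multi-set over $\mathcal{A}$, so it can be identified with its multiplicity vector in $\mathbb{N}^{m}$; the first clause in the definition of $\leq$ is then exactly the componentwise order on $\mathbb{N}^{m}$. Likewise the CHR store $S$, once the identifiers are forgotten, is a multi-set over the (finite) set of user-defined constraints, and the second clause of $\leq$ is again the componentwise order on the corresponding $\mathbb{N}^{m'}$. By Dickson's Lemma the componentwise order on $\mathbb{N}^{k}$ is a well-quasi-order for every $k$, so each of these two components is a wqo.

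Next I would treat the built-in store. Since the only built-in is $=$ interpreted as Herbrand equality and there are no proper function symbols, the logical content of $B$ is fully determined by the equivalence relation it induces on the finite set of constants and variables, together with the degenerate case $B \equiv \mathit{false}$ arising when two distinct constants are equated. As there are only finitely many equivalence relations on a finite set, there are only finitely many built-in stores up to logical equivalence. On this finite set the relation ``is logically equivalent to'' is an equivalence relation, hence a quasi-order, and it is trivially a well-quasi-order: by the pigeonhole principle any infinite sequence of built-in stores must contain two that fall in the same class.

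Finally I would combine the three components. It is a standard fact, a mild generalisation of Dickson's Lemma, that a finite Cartesian product of well-quasi-orders, ordered componentwise, is again a well-quasi-order; the counter $n$ is simply ignored by $\leq$ and plays no role. Checking that $\leq$ is a quasi-order (reflexivity and transitivity) is immediate from the corresponding properties of the multiset orders and of logical equivalence. The only genuinely delicate point is the justification of finiteness---in particular the claim that the built-in store ranges over finitely many equivalence classes---since everything downstream reduces to Dickson's Lemma once this is in place; I would therefore argue the absence of function symbols explicitly, as that is exactly what rules out the unbounded term structure which would otherwise break the finite-alphabet argument.
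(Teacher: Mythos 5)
Your proof is correct, and it rests on the same two pillars as the paper's own argument --- the finiteness of the constraint alphabet (hence of built-in stores up to logical equivalence) guaranteed by the absence of function symbols and the range-restriction assumption, and the wqo property of multiset inclusion over a finite alphabet --- but it packages them differently. You factor the claim cleanly through Dickson's Lemma and the closure of well-quasi-orders under finite products, treating the goal, the store (with identifiers forgotten) and the built-in store as independent components and invoking standard results for each. The paper instead proves the combinatorial core by hand: it argues by contradiction, first extracting an infinite subsequence with a fixed built-in store (your pigeonhole step), and then repeatedly refining subsequences by deleting configurations whose multiplicity vectors fail a comparison, until the finiteness of the alphabet forces two comparable configurations --- in effect an inlined, ad hoc re-derivation of Dickson's Lemma. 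Your modular route buys clarity and robustness: the delicate point is isolated exactly where you put it, namely the finiteness of the alphabet and of the built-in stores up to equivalence, and the rest is citation of well-known wqo closure properties (note only that ``finite product of wqos is a wqo'' needs the extraction of an infinite non-decreasing subsequence in each factor, not just a single comparable pair, but this is indeed standard). The paper's bare-hands argument is self-contained but harder to verify; yours would be the preferable write-up.
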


 Next, in order to obtain our decidability results we have to show that the strong compatibility property holds. This is the content of the following lemma whose proof is in \cite{tech_report}.

\begin{lemma}\label{Lemma:wstssc}
Given a CHR($\HLnofunctors$) program $P$, $(\Conf, \stackrel{\omega_o}{\rightarrow_P}, \leq)$ is a well-structured transition system with strong compatibility.
\end{lemma}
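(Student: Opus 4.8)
The plan is to verify the two conditions in Definition~\ref{def:wsts} for the triple $(\Conf, \stackrel{\omega_o}{\rightarrow_P}, \leq)$. The first condition, that $\leq$ is a well-quasi-order, is exactly Lemma~\ref{lemma:wqo}, so I may invoke it directly. Therefore the entire content of this lemma reduces to establishing the second condition, strong (upward) compatibility: for all configurations with $s_1 \leq t_1$ and every transition $s_1 \stackrel{\omega_o}{\rightarrow_P} s_2$, I must exhibit a $t_2$ with $t_1 \stackrel{\omega_o}{\rightarrow_P} t_2$ and $s_2 \leq t_2$.

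First I would fix $s_1 \leq t_1$ and proceed by a case analysis on which of the three transition rules in Table~\ref{table:traditional_semantics} (adapted to $\omega_o$) justifies $s_1 \stackrel{\omega_o}{\rightarrow_P} s_2$. The intuition behind the argument is that $\leq$ is a componentwise multiset ordering on the goal and the CHR store together with logical equivalence on the built-in store, so whenever some constraint or combination of constraints is present in $s_1$ to enable a step, the defining inequalities of $s_1 \leq t_1$ guarantee that the same constraints are present (with at least the same multiplicity) in $t_1$, hence the same transition is enabled from $t_1$. For the \textbf{Solve} case, if $s_1$ moves a built-in $c$ from the goal into the store, then $t_1$ contains a copy of $c$ in its goal too, so $t_1$ can perform the matching \textbf{Solve}; I then check that the resulting built-in stores stay logically equivalent (since $B_1$ and $B_2$ were equivalent, $c\wedge B_1$ and $c\wedge B_2$ are) and the remaining goal/store inequalities are preserved. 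The \textbf{Introduce} case is analogous, moving a CHR constraint $c$ from goal to store with a fresh identifier; here the key point is that the precise choice of identifier is immaterial to $\leq$, which counts constraints up to their $\code{chr}$ content, not their identifiers.

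The \textbf{Apply} case is the one requiring the most care and is where I expect the main obstacle to lie. Here $s_1$ uses a matching substitution $\theta$ and heads $H_1, H_2$ drawn from its CHR store, with the guard entailed by $B_1$. Since $t_1 \geq s_1$ contains at least as many copies of each head constraint, the same $H_1, H_2$ (up to choosing identified copies present in $t_1$) are available in $t_1$; and because $B_1$ is logically equivalent to the built-in store of $t_1$, the entailment condition $\mathcal{CT} \models B \rightarrow \exists_{-Fv(B)}(\theta \wedge g)$ transfers verbatim, so the same rule instance with the same $\theta$ fires from $t_1$. The delicate steps are, first, that $\omega_o$ (unlike $\omega_t$) carries no propagation history, so there is no token-freshness side condition that could block the firing from $t_1$ even when it is available from $s_1$ --- this is precisely why the result is stated for $\omega_o$ and is what makes strong compatibility go through cleanly. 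Second, I must verify that after firing the same rule and adding the same body $C$ to both goals and removing the same $H_2$ from both stores, the inequality $s_2 \leq t_2$ still holds componentwise; this is where the fact that $H_1$ is retained and $H_2$ removed identically on both sides, combined with the range-restrictedness guaranteeing that $C\theta$ introduces no variables outside those already in the (shared) head instance, ensures the goal and store multiset inequalities are preserved and the built-in stores remain logically equivalent.

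Having dispatched all three cases, I would conclude that strong compatibility holds, and together with Lemma~\ref{lemma:wqo} this yields that $(\Conf, \stackrel{\omega_o}{\rightarrow_P}, \leq)$ is a well-structured transition system with strong compatibility, as required. The crux of the whole argument is thus the \textbf{Apply} case under $\omega_o$: the absence of a propagation history removes the only mechanism by which a transition enabled from the smaller state could fail to be enabled from the larger one, and range-restrictedness keeps the variable set fixed so that the order is genuinely preserved by the step.
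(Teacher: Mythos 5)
Your proposal is correct and follows essentially the same route as the paper's own proof: invoke Lemma~\ref{lemma:wqo} for the well-quasi-order condition, then establish strong compatibility by a case analysis on the three transitions, with \textbf{Solve} and \textbf{Introduce} immediate and \textbf{Apply} handled by noting that the larger configuration contains the same head constraints and a logically equivalent built-in store, so the same rule instance fires and the order is preserved. Your write-up is in fact somewhat more detailed than the paper's (e.g.\ the explicit remarks on identifier-independence of $\leq$ and on the absence of a propagation history in $\omega_o$), but the argument is the same.
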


Finally we have the desired result.

\begin{theorem}\label{theorem:decid1}
Given a range-restricted CHR($\HLnofunctors$) program $P$ and a goal $G$, the existence of an infinite computation for $G$ in $P$ is decidable.
\end{theorem}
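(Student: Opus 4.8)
The plan is to apply Theorem \ref{divdec} to the transition system $T = (\Conf, \stackrel{\omega_o}{\rightarrow_P})$ equipped with the quasi-order $\leq$. Lemmas \ref{lemma:wqo} and \ref{Lemma:wstssc} already supply two of the four hypotheses of that theorem: $\leq$ is a well-quasi-order and the system enjoys strong compatibility. It therefore remains to verify the three effectiveness conditions, namely that $T$ is finitely branching, that $\leq$ is decidable, and that $Succ(s)$ is computable for each $s \in \Conf$; once these are in place the conclusion follows immediately by instantiating Theorem \ref{divdec} at the initial configuration $\langle G, \emptyset, true \rangle_1$.

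First I would establish finite branching. Recall from the discussion preceding the lemmas that, thanks to range-restrictedness over the function-symbol-free signature, computations use only the finitely many constants and variables occurring in $G$ and $P$, and no fresh variables ever arise. Fix a configuration $s = \langle G', S', B' \rangle_m$; its goal $G'$ and store $S'$ are finite. A \textbf{Solve} or \textbf{Introduce} step consumes one constraint chosen from the finite multiset $G'$, yielding finitely many successors (and \textbf{Introduce} fixes the new identifier as $m$, contributing a single successor per chosen CHR constraint). An \textbf{Apply} step selects one of the finitely many rules of $P$ together with a pair of sub-multisets $H_1, H_2$ of the finite store $S'$; since the heads are built only from constants and variables, each such choice admits at most finitely many matching substitutions $\theta$, and the range-restriction guarantees that the body introduces no new variable. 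Hence $Succ(s)$ is finite.

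Next I would verify decidability of $\leq$ and computability of $Succ$. Deciding $s_1 \leq s_2$ amounts to comparing finitely many constraint multiplicities in the goals and (modulo identifiers) in the stores, which is effective, together with testing logical equivalence of the built-in stores $B_1$ and $B_2$; since these are conjunctions of equalities interpreted in the Herbrand universe and $\mathcal{CT}$ is decidable, this test is effective as well. Computability of $Succ(s)$ then follows from finite branching together with the fact that each transition of $\omega_o$ is effectively describable: we can enumerate the finitely many applicable \textbf{Solve}, \textbf{Introduce} and \textbf{Apply} instances and compute the resulting configurations.

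With all four hypotheses of Theorem \ref{divdec} verified for $T$ and $\leq$, the existence of an infinite computation for $G$ in $P$ is decidable, which is exactly the claim. The step I would expect to be the main obstacle is the finite-branching argument under $\omega_o$: because this semantics drops the propagation history, propagation rules may fire unboundedly often along a run, so one must confirm carefully that branching is nonetheless finite at each individual state. This holds precisely because every state has a finite store and the signature admits no function symbols of positive arity, which bounds the number of head matchings and hence the number of \textbf{Apply} successors.
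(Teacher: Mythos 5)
Your proposal is correct and follows essentially the same route as the paper: both invoke Theorem \ref{divdec} via Lemmas \ref{lemma:wqo} and \ref{Lemma:wstssc}, with finite branching justified by the fact that range-restriction over a function-free signature prevents new variables from arising. Your version is merely a bit more explicit than the paper's in spelling out the decidability of $\leq$ and the computability of $Succ$, which the paper leaves implicit.
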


\begin{proof}
First observe that, due to our assumption on range-restricted programs, $T= (\Conf,\stackrel{\omega_o}{\rightarrow_P})$ is finitely branching. In fact, as previously mentioned, the use of rule Apply can not introduce new variables (and hence new different states). The thesis follows immediately from Lemma  \ref{Lemma:wstssc} and Theorem \ref{divdec}.
\end{proof}

The previous Theorem implies that range-restricted CHR($\HLnofunctors$) is strictly less expressive than Turing Machines, in the sense that there can not exist a termination preserving encoding of Turing Machines into range-restricted CHR($\HLnofunctors$). To be more precise, we consider an encoding of a Turing Machine into a CHR language as a function $f$ which, given a machine $Z$ and an initial instantaneous description $D$ for $Z$, produces a CHR program and a goal.
This is denoted by $(P,G) = f(Z,D)$. Hence we have the following.

\begin{definition}[Termination preserving encoding]
An encoding $f$ of Turing Machines into a CHR language is termination preserving\footnote{For many authors the existence of a termination preserving encoding into a non-deterministic  language $L$ is equivalent to the Turing completeness of $L$, however there is no general agreement on this, since for others a weak termination preserving encoding suffices.}
if the following holds: the machine $Z$ starting with $D$ terminates iff the goal $G$ in the CHR program $P$ has only terminating computations, where $(P,G) = f(Z,D)$. The encoding is weak termination preserving if: the machine $Z$ starting with $D$ terminates iff the goal $G$ in the CHR program $P$ has at least one terminating computation.
\end{definition}

Since termination is undecidable for Turing Machines, we have the following immediate corollary of Theorem \ref{theorem:decid1}.

\begin{corollary}\label{theorem:decid2}
There exists
no termination preserving encoding of Turing Machines into range-restricted CHR($\HLnofunctors$).
\end{corollary}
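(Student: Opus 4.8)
The plan is to argue by contradiction, reducing the halting problem for Turing Machines to a decidable question about range-restricted CHR($\HLnofunctors$) programs. Suppose, toward a contradiction, that a termination preserving encoding $f$ of Turing Machines into range-restricted CHR($\HLnofunctors$) exists. By the very notion of an encoding, $f$ is an effective (computable) map: given a machine $Z$ and an initial instantaneous description $D$, it outputs a concrete range-restricted CHR($\HLnofunctors$) program and goal $(P,G) = f(Z,D)$.

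Next I would unfold the definition of \emph{termination preserving}. By definition, $Z$ started on $D$ terminates if and only if the goal $G$ has only terminating computations in $P$; equivalently, $Z$ started on $D$ does \emph{not} terminate if and only if $G$ has at least one infinite computation in $P$. The crucial observation is that ``$G$ has an infinite computation in $P$'' is exactly the property whose decidability is guaranteed by Theorem \ref{theorem:decid1}.

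Combining these two facts yields a decision procedure for halting: given $(Z,D)$, first compute $(P,G) = f(Z,D)$ using the effectiveness of $f$, and then invoke the algorithm of Theorem \ref{theorem:decid1} to decide whether $G$ has an infinite computation in $P$. The machine halts precisely when no infinite computation exists, i.e.\ precisely when the algorithm answers ``no''. This gives an algorithm deciding whether an arbitrary $Z$ halts on $D$, contradicting the undecidability of the halting problem, so no such $f$ can exist.

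The argument is essentially routine once Theorem \ref{theorem:decid1} is available; the only point requiring care is the implicit assumption that an encoding is computable, so that the composition of $f$ with the decision procedure of Theorem \ref{theorem:decid1} is itself an algorithm. Without effectiveness of $f$ the reduction would not yield a genuine decision procedure, so I would make this assumption explicit. I expect this to be the main (and only minor) obstacle; everything else is a direct translation between ``only terminating computations'' and ``no infinite computation.''
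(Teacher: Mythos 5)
Your proposal is correct and is exactly the argument the paper intends: the paper presents this as an ``immediate corollary'' of Theorem \ref{theorem:decid1} together with the undecidability of the halting problem, which is precisely the reduction you spell out. Your remark about the effectiveness of the encoding $f$ is a reasonable explicit clarification of an assumption the paper leaves implicit, but it does not change the substance of the argument.
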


Note that the previous result does not exclude the existence of weak encodings. For example, in \cite{BGZ04} it is showed
that the existence an infinite computation is decidable in CCS!, a variant of CCS, yet it is possible to provide a weak termination preserving encoding of Turing Machines in CCS! (essentially by adding spurious non-terminating computations).
We conjecture that such an encoding is not possible for CHR($\HLnofunctors$). Note also that previous results imply that  range-restricted CHR($\HLnofunctors$) is strictly less
expressive than CHR($\HLnofunctors$): in fact there exists  a termination preserving encoding of Turing Machines into CHR($\HLnofunctors$) \cite{DBLP:conf/iclp/Sneyers08,DBLP:conf/sofsem/GiustoGM09}.

\section{Single-headed CHR($\HLnofunctors$)}\label{sec:sh}

As mentioned in the introduction, while CHR($\HLnofunctors$) and CHR$_1$($\HLfunctors$)
are Turing complete languages \cite{DBLP:conf/iclp/Sneyers08,DBLP:conf/sofsem/GiustoGM09},
the question of the expressive power of CHR$_1$($\HLnofunctors$) is open. Here we answer to
this question by proving that the existence of a terminating computation is decidable for this language,
thus showing that CHR$_1$($\HLnofunctors$) is less expressive than Turing machines. Throughout this section, we assume that the abstract semantics $\omega_o$ is considered (however see the discussion at the end for an extension to the case of $\omega_t$).
The proof we provide is a direct one, since neither well-structured transition systems nor reduction to Petri Nets can be used here (see the introduction).

\subsection{Some %
preparatory results}

We introduce here two more notions, namely the forest associated to a computation and the notion of reactive sequence, and some related results. We will need them for the main result of this section.

First, we observe that it is possible to associate to the computation for an atomic goal $G$ in a program $P$ a tree where, intuitively, nodes are labeled by  constraints (recall that these are atomic formulae), the root is $G$ and every child node is obtained from the parent node by firing a rule in the program $P$. This notion is defined precisely in the following, where we generalize it to the case of a generic (non atomic) goal, where for each CHR constraint in the goal we have a tree. Thus we obtain a {\em forest}  $F_\delta = (V,E)$ associated to a  computation  $\delta$, where $V$ contains a node for each repetition of identified CHR constraints in $\delta$. Before defining the forest we need the concept of {\em repetition} of an identified CHR atom in a computation.

\begin{definition}[Repetition]
Let  $P$ be a CHR program  and let
$\delta$ be a computation  in $P$. We say that an occurrence of an identified  CHR constraint $h\#l$ in $\delta$ is the $i$-th repetition of $h\#l$, denoted by $h\#l^i$, if it is preceded  in $\delta$ by $i$ $\bf{Apply}$ transitions of propagation rules whose heads match the atom $h\#l$.
We also define \[r(\delta, h\#l)=max\{i \mid \mbox{ there exists a $i$-th repetition of $h\#l$ in $\delta$}\}\]
\end{definition}

\begin{definition}[Forest]\label{def:forest}
 Let $\delta$ be a terminating computation for a goal in a  CHR$_1$($\HLnofunctors$) program. The forest associated to $\delta$, denoted by $F_\delta = (V,E)$ is defined as follows.  $V$ contains nodes labeled either by repetitions of identified CHR constraints in $\delta$ or by $\Box$. $E$ is the set of edges.
 The labeling and the edges in $E$ are defined %
 as follows:

  \noindent
 (a)    For each CHR constraint $k$ which occurs in the first configuration of $\delta$ there exists a tree in  $F_\delta = (V,E)$,  whose root is labeled by a repetition $k\#{l}^{0}$, where $k\#{l}$ is the identified CHR constraint associated to $k$ in $\delta$.

 \noindent
 (b)   If  $n$ is a node in $F_\delta = (V,E)$ labeled by $k\#{l}^{i}$ and the rule ${\it r}\ @ h \odot g \mid C, k_1,\ldots, k_m$ is used in $\delta$ to rewrite the repetition $h\#l^i$,
    where $\odot \in \{\Longleftrightarrow, \Longrightarrow\}$, the $k_i's$ are  CHR constraints while $C$ contains built-ins, then we have two cases:

\noindent
\begin{enumerate}
 \item
 If $\odot$ is $\Longrightarrow$ then $n$ has $m+1$ sons, labeled by $k_j\#{l_j}^{0}$, for $j \in [1,m]$, and by $h\#l^{i+1}$, where the $k_j\#{l_j}^{0}$ are the repetitions generated by the application of the rule $r$ to $h\#l^i$ in $\delta$.
 \item
 If $\odot$ is $\Longleftrightarrow$ then:
       \begin{itemize}
        \item
          if $m>0$ then  $n$ has $m$ sons, labeled by $k_j\#{l_j}^{0}$, for $j \in [1,m]$, where $k_j\#{l_j}^{0}$ are the repetitions generated by the application of the rule $r$ to $h\#l^i$ in $\delta$.
         \item
          if $m=0$ then $n$ has $1$ son, labeled by $\Box$.
        \end{itemize}
      \end{enumerate}

 \end{definition}

Note that, according to the previous definition, nodes which are not leaves are labeled by repetitions of identified constraints $k\#{l}^{i}$, where either $i< r(\delta, h\#l)$ or $h\#{l}$ does not occur in the last configuration of $\delta$. On the other hand, the leaves of the trees in $F_\delta$ are labeled either by $\Box$  or by the repetitions which do not satisfy the condition above. An example can help to understand this crucial definition.

\begin{example}
\label{label:example_forest}
Let us consider the following program $P$:

\begin{verbatim}
        r1 @ c(X,Y) <=> c(X,Y),c(X,Y)
        r2 @ c(X,Y) <=> X = 0
        r3 @ c(0,Y) ==> Y = 0
        r4 @ c(0,0) <=> true
\end{verbatim}

\noindent
There exists a terminating computation  $\delta$ for the goal $c(X,Y)$ in the program $P$, which uses the clauses ${\tt r1},{\tt r2},{\tt r3},{\tt r4}$ in that order and whose associated forest $F_\delta$ is the following tree:

\begin{center}
\xymatrix{
& c(X,Y)\#1^0 \ar[dr] \ar[dl]\\
c(X,Y)\#2^0 \ar[d] && c(X,Y)\#3^0 \ar[d]\\
\Box && c(X,Y)\#3^1 \ar[d]\\
&& \Box
}
\end{center}

Note that the left branch corresponds to the termination obtained by using  rule {\tt r2},
hence the superscript is not incremented. On the other hand, in the right branch the superscript $^0$ at
the second level becomes $^1$ at the third level. This indicates that a propagation rule (rule {\tt r3}) has been applied.
\end{example}

Given a forest  $F_\delta$, we write $T_\delta(n)$ to denote the subtree of $F_\delta$ rooted in the node $n$. Moreover, we identify a node with its label and we omit the specification of the repetition, when not needed.
The following definition introduces some further terminology that we will need later.

\begin{definition}
\begin{itemize}
\item Given a forest  $F_\delta$, a path from a root of a tree in the forest to a leaf  is called a {\em single constraint computation}, or {\em sc-computation } for short.

  \item Two repetitions $h\#l^i$ and $k\#m^j$ of identified CHR constraints are called  r-equal, indicated by $h\#l^i == k\#m^j$, iff there exists a renaming $\rho$ such that $h= k\rho.$
  \item a sc-computation  $\sigma$ is $p$-repetitive if $p = \max_{h\#l^i \in \sigma} | \{ k\#m^j \in \sigma \mid \ h\#l^i == k\#m^j  \} |.$
  \item The degree of a $p$-repetitive sc-computation  $\sigma$, denoted by $dg(\sigma)$ is the cardinality of the set $P\_REP$ which is defined as  the maximal set having the following properties:
  \begin{itemize}
   \item
    contains a repetition $h\#l^i$ in $\sigma$ iff $p=| \{ k\#m^j \in \sigma \mid \ h\#l^i == k\#m^j  \} |$
   \item
    if $h\#l^i$ is in $P\_REP$ then $P\_REP$ does not contain a repetition $k\#m^j$ s.t. $h\#l^i == k\#m^j$
  \end{itemize}

  \item A forest $F_\delta$ is $l$-repetitive if one of its sc-computation $\sigma$ is $l$-repetitive and there is no $l'$-repetitive sc-computation $\sigma'$ in $F_\delta$ with $l' > l$.
  \item  The degree $dg(F_\delta)$ of an $l$-repetitive forest $F_\delta$ is defined as
   \[dg(F_\delta)= \sum_\sigma \{dg(\sigma) \mid  \ \sigma \mbox{ is an $l$-repetitive sc-computation  in } F_\delta\}.\]
\end{itemize}
\end{definition}

After the forest, the second main notion that we need to introduce is that one of reactive sequence\footnote{This notion is similar to that one used in the (trace) semantics of concurrent languages, see, for example, \cite{BP90a,BGM00} for the case of concurrent constraint programming. The name comes from this field.}.

Given a computation  $\delta$, we associate
to each (repetition of an) occurrence of an identified CHR atom $k\#{l}$ in $\delta$ a, so called, reactive sequence of the form $\langle c_1 , d_1\rangle \ldots \langle c_n , d_n\rangle$,
where, for any $i \in [1,n]$,  $c_i, \ d_i $ are built-in constraints.

Intuitively each pair $\langle c_i , d_i\rangle$ of built-in constraints  represents  all the $\bf{Apply}$ transition steps, in the computation $\delta$, which are used to rewrite  the considered occurrence of the identified CHR atom
$k\#{l}$  and the  identified atoms derived from it. The constraint $c_i$ represents the input for this sequence of  $\bf{Apply}$ computation  steps, while $d_i$ represents the output of such a sequence. Hence one can also read such a pair as follows: the identified CHR constraint $k\#{l}$, in $\delta$, can transform the built-in store from $c_i$ to
$d_i$. Different pairs $\langle c_i , d_i\rangle$ and $\langle c_j , d_j\rangle$ in the reactive sequence correspond to different sequences of $\bf{Apply}$ transition steps.
This intuitive notion is further clarified later (Definition \ref{def:seqrep}), when we will consider a reactive sequence associated to a repetition of an identified CHR atom.

Since in CHR computations the built-in store evolves monotonically, i.e. once a constraint is added it can not be retracted, it is natural to assume that reactive sequences are monotonically increasing. So in the following we
will assume that, for each reactive sequence $\langle c_1 , d_1\rangle \ldots \langle c_n , d_n\rangle$, the following condition holds: $CT \models d_j \rightarrow c_j$ and $CT \models c_{i+1} \rightarrow d_i$ for $j \in [1,n]$, $i \in [1,n-1]$.
Moreover, we denote the empty sequence by $\varepsilon$. Next, we define the strictly increasing reactive sequences w.r.t. a set of variables $X$.

\begin{definition}[Strictly increasing sequence]
Given a reactive sequence $s=\langle c_1, d_1\rangle \cdots \langle c_n, d_n \rangle$, with $n \geq 0$ and a set of variables $X$, we say that $s$ is strictly increasing with respect to $X$ if the following holds for any $j \in [1,n]$, $i \in [1,n-1]$
\begin{itemize}
  \item $Fv(c_j, d_j) \subseteq X$,
  \item $CT \models d_i \not \rightarrow c_{i+1}$ and $CT \models c_i \not \rightarrow d_{i}$.
\end{itemize}
\end{definition}

Given a generic reactive sequence $s=\langle c_1, d_1\rangle \cdots \langle c_n, d_n \rangle$ and a set of variables $X$, we can construct a
new, strictly increasing sequence $\eta(s,X)$ with respect to a set of variables $X$ as follows. First  the operator $\eta$ restricts all the constraints in $s$ to the variables in $X$ (by considering the existential closure with the exception of the variables in $X$). Then  $\eta$ removes from the sequence all the stuttering steps (namely the pairs of constraints $\langle c, d \rangle$, such that $CT \models c \leftrightarrow d$) except the last.  Finally, in the sequence produced by the two previous steps, if there exists a pair of consecutive elements $\langle c_l, d_l\rangle \langle c_{l+1}, d_{l+1} \rangle$ which are ``connected'', in the sense that $c_{l+1}$ does not provide more information than $d_l$, then such a pair is ``fused'' in (i.e., replaced by) the unique element $\langle c_l,  d_{l+1} \rangle$ (and this is repeated inductively for the new pairs). This is made precise by the following definition.

\begin{definition}[Operator $\eta$]\label{def:eta}
Let $s=\langle c_1, d_1\rangle \cdots \langle c_n, d_n \rangle$ be a sequence of pairs of built-in stores and  let $X$ be a set of variables.  The sequence $\eta(s,X)$ is the obtained as follows:
\begin{description}
  \item[1] First we define      $s'=\langle c'_1, d'_1\rangle \cdots \langle c'_n, d'_n \rangle$, where for $j \in [1,n]$
      $c'_j= \exists _{-X}c_j$ and $d'_j= \exists _{-X}d_j$.
      \item[2] Then we define $s''$ as the sequence obtained from $s'$ by removing each pair of the form $\langle c, d\rangle$ such that $CT \models c \leftrightarrow d$,  if such a pair is not the last one of the sequence.
   \item[3] Finally we define $\eta(s,X) = s'''$, where $s'''$ is the closure of $s''$ w.r.t. the following operation: if  $\langle c_l, d_l\rangle \langle c_{l+1}, d_{l+1} \rangle$ is a pair of consecutive elements in the sequence and $CT \models d_l  \rightarrow c_{l+1}$  holds then such a pair  is substituted  by  $\langle c_l,  d_{l+1} \rangle$.
\end{description}
\end{definition}

The following Lemma states a first useful property. The proof is in \cite{tech_report}.

\begin{lemma}\label{lem:maxlungh}
Let $X$ be a finite set of variables and let $s= \langle c_1, c_2\rangle \cdots \langle c_{n-1}, c_n \rangle$ be a strictly increasing sequence  with respect to $X$. Then $n \leq |X| +2$.
\end{lemma}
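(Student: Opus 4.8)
The plan is to reduce the claim to a bound on the length of a strictly increasing chain in the entailment order on built-in stores, and then to exploit the fact that over the function-free signature \HLnofunctors{} these stores are merely equality constraints, whose chains have height bounded by the number of variables.

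First I would fix the order. Writing $\phi \sqsubseteq \psi$ for $\mathcal{CT} \models \psi \rightarrow \phi$, so that $\psi$ carries at least as much information as $\phi$, the monotonicity assumption on reactive sequences gives non-strict inequalities between the consecutive stores of $s$, while the two strictness conditions defining a strictly increasing sequence turn each of them into a strict one. Collecting the stores of $s$ in order therefore produces a strictly increasing chain $c_1 \sqsubsetneq c_2 \sqsubsetneq \cdots \sqsubsetneq c_n$, all with free variables in $X$; it then suffices to prove that any such strict chain has at most $|X|+2$ elements.

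Next I would classify built-in stores up to $\mathcal{CT}$-equivalence. Since the only built-in is $=$ and \HLnofunctors{} has no function symbol of arity $>0$, every satisfiable store with variables in $X$ is equivalent to a solved form consisting of an equivalence relation on $X$ together with a partial assignment of some of its classes to constants, the only further store being the inconsistent one, $\mathit{false}$. To each satisfiable $\phi$ I would assign the rank $\Phi(\phi) = |X| - u(\phi)$, where $u(\phi)$ is the number of $\phi$-classes of variables that are bound to no constant; then $0 \le \Phi(\phi) \le |X|$, with $\Phi = 0$ for $\mathit{true}$.

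The heart of the argument is to show that $\Phi$ strictly increases along the chain. A strict step $\phi \sqsubsetneq \psi$ between satisfiable stores must add genuine equational information, i.e.\ either merge two distinct classes or bind a previously unbound class to a constant, and a short case analysis shows each such possibility lowers $u$ by at least one, while steps that add no information (a redundant equality, or a merge of classes already bound to the same constant) are not strict and a merge of classes bound to distinct constants falls to $\mathit{false}$. Hence $\Phi$ increases by at least one at every step, which bounds the satisfiable stores in the chain by $|X|+1$; allowing $\mathit{false}$ as a final top element yields $n \le |X|+2$. I expect this case analysis, together with the solved-form characterisation, to be the main obstacle, and it is exactly here that the absence of function symbols is indispensable: with genuine functors the equational lattice has unbounded height and no finite rank $\Phi$ exists.
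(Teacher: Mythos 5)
Your proof is correct and follows the same route the paper intends: combine the monotonicity of reactive sequences with the two strictness conditions to get a strictly increasing entailment chain $c_1 \sqsubsetneq \cdots \sqsubsetneq c_n$, then bound its height. In fact your write-up is more complete than the paper's own one-line proof, which merely restates the strictness conditions and leaves implicit exactly the part you work out --- the solved-form characterisation of function-free equality stores and the rank $\Phi(\phi)=|X|-u(\phi)$ that strictly increases at every step, giving at most $|X|+1$ satisfiable stores plus a possible final $\mathit{false}$.
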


Next we note that, given a set of variables $X$ the possible strictly increasing sequences w.r.t. $X$ are finite (up to logical equivalence on constraints), if the set of the constants is finite. This is the content of the following lemma, whose proof is in \cite{tech_report}. Here and in the following, with a slight abuse of notation, given two reactive sequences $s= \langle c_1, d_1\rangle \cdots \langle c_n, d_n \rangle$ and $s'= \langle c'_1, d'_1\rangle \cdots \langle c'_n, d'_n \rangle$, we say that
$s$ and $s'$ are equal (up to logical equivalence) and we write $s=s'$, if for each $i\in [1, n]$
$CT \models c_i \leftrightarrow c'_{i}$ and $CT \models d_i \leftrightarrow d'_{i}$ holds.

\begin{lemma}\label{theo:nummax}
Let $Const$ be a finite set of constants  and let $S$ be a finite set of variables such that $u=|Const|$ and $w=|S|$. The set of sequences $s$ which are strictly increasing with respect to $S$ (up to logical equivalence) is finite and has cardinality at the most
\[\frac{2^{w(u+w)(w+3)} -1}{2^{w(u+w)} -1}.\]
\end{lemma}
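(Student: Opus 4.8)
The plan is to exhibit an injection from the strictly increasing sequences (with respect to $S$) into the strictly increasing chains of built-in constraints, and then to count the latter by combining a bound on the number of constraints with the length bound of Lemma \ref{lem:maxlungh}.

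First I would bound the number of built-in constraints up to logical equivalence. Since the signature of CHR($\HLnofunctors$) contains no function symbol of arity greater than $0$, every built-in is a conjunction of equations between elements of $S\cup Const$. Up to equivalence such a constraint is either unsatisfiable (equivalent to $\mathit{false}$), or it is completely determined by the set of atomic equalities it entails; these are of the form $x=y$ with $\{x,y\}\subseteq S$ (there are $\binom{w}{2}$ of them) or $x=a$ with $x\in S$ and $a\in Const$ (there are $wu$ of them). Hence the number of constraints up to equivalence is at most $2^{\binom{w}{2}+wu}+1$, and since $\binom{w}{2}+wu\le w(u+w)-1$ for $w\ge 1$, this is at most $N:=2^{w(u+w)}$; in particular the relevant domain is finite.

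Next I would read off the chain structure of a strictly increasing sequence from monotonicity together with strictness. For $s=\langle c_1,d_1\rangle\cdots\langle c_n,d_n\rangle$ the standing monotonicity assumption gives $CT\models d_j\to c_j$ and $CT\models c_{i+1}\to d_i$, while the strictness conditions $CT\models c_i\not\to d_i$ and $CT\models d_i\not\to c_{i+1}$ (for $i\in[1,n-1]$) upgrade the corresponding entailments to strict ones. Writing $a\sqsubseteq b$ for $CT\models b\to a$, this yields a chain $c_1\sqsubset d_1\sqsubset\cdots\sqsubset d_{n-1}\sqsubset c_n\sqsubseteq d_n$ that is strict everywhere except possibly at its last step. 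I would then map $s$ to the sequence $e_1\sqsubset\cdots\sqsubset e_m$ of its distinct constituent constraints. This map is injective, because $s$ is recovered from the chain by re-pairing consecutive elements two at a time, doubling the last element into $\langle e_m,e_m\rangle$ when $m$ is odd. By Lemma \ref{lem:maxlungh} we have $m\le w+2$.

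Finally I would count. The number of strictly increasing chains of exactly $m$ constraints is at most $N^m$ (an ordered choice among the at most $N$ available constraints), so the number of strictly increasing sequences is bounded by $\sum_{m=0}^{w+2}N^m=\frac{N^{w+3}-1}{N-1}=\frac{2^{w(u+w)(w+3)}-1}{2^{w(u+w)}-1}$, and finiteness is immediate. The main obstacle is the chain step: one must carefully verify that monotonicity plus the strictness conditions force a genuinely strict chain, with the single admissible exception at the last pair, and that the induced map to chains is injective. The constraint count of the first step and the geometric-series arithmetic of the last step are then routine.
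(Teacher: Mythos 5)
Your proof is correct and follows essentially the same route as the paper's: bound the number of built-in constraints over $S$ up to logical equivalence by $2^{w(u+w)}$, bound the number of constraints occurring in a strictly increasing sequence by $w+2$ via Lemma~\ref{lem:maxlungh}, and sum the resulting geometric series $\sum_{m=0}^{w+2}\bigl(2^{w(u+w)}\bigr)^m$. The only minor differences are that you justify the $2^{w(u+w)}$ bound by counting sets of entailed atomic equalities (where the paper counts constraints as sets of instantiations of the variables in $S$), and that you make explicit the injectivity of the passage from sequences to strict chains, which the paper leaves implicit; neither changes the approach.
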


Finally, we show how reactive sequences can be obtained from a forest associated to a computation. First we need to define the reactive sequence associated to a repetition of an identified CHR atom in a computation. In this definition we use the operator $\eta$ introduced in Definition \ref{def:eta}.

\begin{definition}\label{def:seqrep}
Let $\delta$ be a computation for a CHR$_1$($\HLnofunctors$) program, $h\#l^j$ be a repetition of an identified CHR atom in $\delta$ and $r_1, \dots, r_n$ the sequence of the $\bf{Apply}$ transition in $\delta$ that rewrite $h\#l^j$ and all the repetitions derived from it.
If $s \stackrel{r_i}{\rightarrow_P} s'$ let $pair(r_i)$ be the pair $(\bigwedge B_1, \bigwedge B_2)$ where $B_1$ and $B_2$ are all the built-ins in $s$ and $s'$.
We will denote with $seq(h\#l^j,\delta)$ the sequence $\eta( pair(r_1) \ldots pair(r_n), Fv(h))$
\end{definition}

Finally we define the function $S_{F_\delta}$ which, given a node $n$ in a forest associated to a computation $\delta$ (see Definition \ref{def:forest}), returns a reactive sequence. Such a sequence intuitively represents the sequence of the $\bf{Apply}$ transition steps which have been used in $\delta$ to rewrite the repetition labeling $n$ and the repetitions derived from it.

\begin{definition}[Sequence associated to a node in a forest]
 Let $\delta$ be a terminating computation  and let  $F_\delta = (V,E)$ be the forest associated to it. Given a node $n$ in $F_\delta$ we define:
   \begin{itemize}
    \item if the label of $n$ is $h\#l^i$, then $S_{F_\delta}(n) = seq(h\#l^i,\delta)$;
    \item if the label of $n$ is $\Box$ then $S_{F_\delta} (n)= \varepsilon$.
   \end{itemize}
 \end{definition}

\begin{example}
Let us consider for instance the forest shown in Example \ref{label:example_forest}. The sequences associated to the nodes of this forest are:
\begin{itemize}
 \item
  $S_{F(\delta)}(c(X,Y)\#1^0) = \langle true,X = 0 \wedge Y=0 \rangle$
 \item
  $S_{F(\delta)}(c(X,Y)\#2^0) = \langle true,X = 0 \rangle$
 \item
  $S_{F(\delta)}(c(X,Y)\#3^0) = \langle X = 0, X=0 \wedge Y=0 \rangle$
 \item
  $S_{F(\delta)}(c(X,Y)\#3^1) = \langle X = 0 \wedge Y=0,X = 0 \wedge Y=0 \rangle$
\end{itemize}
\end{example}

\subsection{Decidability of termination}

We are now ready to prove the main result of the paper.
First we need the following Lemma which has some similarities to the pumping lemma
of regular and context free grammars. Indeed, if the derivation is seen as a forest,
this lemma allows us to compress a tree if in a path of the tree there are two r-equal
constraints with an equal (up to renaming) sequence.
The %
lemma is proved
in \cite{tech_report}.

Here and in the following given a node $n$ in a forest $F$ we  denote by $A_F(n)$ the label associated to $n$.

\begin{lemma} \label{lem:substitute}
Let $\delta$  be a terminating computation for the goal $G$ in the CHR$_1$($\HLnofunctors$)  program $P$. Assume that $F_\delta$ is $l$-repetitive with  $p=dg(F_\delta)$ and assume that there exists an $l$-repetitive sc-computation  $\sigma$ of $F_\delta$ and a repetition $k\#l^i \in \sigma$  such that $l = | \{h\#n^{j}\in \sigma \mid \ h\#n^{j} == k\#l^i \}|$.\\
Moreover assume that there exist two distinct nodes $n$ and $n'$ in $\sigma$ such that
$n'$ is a node in $T_\delta(n)$, $A_{F_\delta}(n)= k\#l^i$,  $A_{F_\delta}(n')= k'\#{l'}^{i'}$ and $\rho$ is a renaming such that $S_{F_\delta}(n)= S_{F_\delta}(n')\rho$ and $k = k'\rho$.

Then there exists a terminating computation  $\delta'$ for the goal $G$ in the program $P$, such that
either
$F_{\delta'}$ is $l'$-repetitive with $l'<l$, or $F_{\delta'}$ is $l$-repetitive and
$dg(F_\delta')<p$.
\end{lemma}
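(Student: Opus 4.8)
The plan is to prove this by a \emph{compression} (pumping) argument on the forest $F_\delta$, measured by the lexicographic pair $(l, dg(F_\delta))$, exploiting the fact that in a single-headed language each CHR constraint is rewritten independently of the others: the only interaction between the subtree $T_\delta(n)$ and the rest of the computation passes through the built-in store, and this interaction is recorded exactly by the reactive sequence $S_{F_\delta}(n)$.

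First I would construct $\delta'$ by replacing, in $F_\delta$, the whole subtree $T_\delta(n)$ rooted at $n$ by the subtree $T_\delta(n')$ rooted at $n'$, after applying the renaming $\rho$ to it (chosen fresh, so as to avoid identifier and variable clashes). Since $n'$ lies inside $T_\delta(n)$ on the path $\sigma$, this replacement is well defined. The crucial step is to argue that the resulting spliced forest corresponds to a genuine terminating $\omega_o$-computation $\delta'$ for $G$ in $P$. Here I would use the two hypotheses $k = k'\rho$ and $S_{F_\delta}(n) = S_{F_\delta}(n')\rho$: the first guarantees that the atom occupying the position of $n$ is unchanged, while the second guarantees that the sequence of input/output built-in stores produced by the spliced subtree coincides, up to $\rho$, with the one produced by $T_\delta(n)$. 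Because rules fire on a single head, no \textbf{Apply} step outside the replaced subtree depends on the internal structure of $T_\delta(n)$ beyond this built-in behaviour; hence all guards elsewhere remain entailed, the propagation history stays consistent, and the computation still reaches a final configuration. Monotonicity of the built-in store together with the strictly-increasing shape of reactive sequences guaranteed by Lemma~\ref{lem:maxlungh} keeps the intermediate stores a valid increasing chain throughout the splice.

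Then I would check that the measure strictly decreases. Since $A_{F_\delta}(n) = k\#l^i$ and $A_{F_\delta}(n') = k'\#{l'}^{i'}$ with $k = k'\rho$, the two nodes carry r-equal labels, and by assumption $k\#l^i$ belongs to the maximal, $l$-sized r-equal class of $\sigma$. Replacing $T_\delta(n)$ by $T_\delta(n')\rho$ deletes from $\sigma$ the repetitions on the segment from $n$ down to (but not including) $n'$, in particular the repetition at $n$ itself; hence the corresponding path of $F_{\delta'}$ has this class reduced to size $<l$. Because $T_\delta(n')$ is a subtree of $T_\delta(n)$, every path of $F_{\delta'}$ is a shrinking of a path of $F_\delta$, so no r-equal class can grow and $F_{\delta'}$ is $l''$-repetitive for some $l'' \leq l$. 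If $l'' < l$ we are in the first case of the statement. If $l'' = l$, then shrinking has created no new $l$-sized class, the chosen path has lost one from its $P\_REP$, and every other $l$-repetitive path has its degree unchanged or has dropped out of the sum; therefore $dg(F_{\delta'}) < p$, giving the second case.

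The main obstacle is making the first step fully rigorous, i.e.\ proving that splicing the renamed subtree yields an actual $\omega_o$-derivation and not merely a syntactically plausible forest. I expect this to need an auxiliary interchange lemma stating that whenever two nodes carry r-equal atoms and $\eta$-equal reactive sequences their subtrees are interchangeable in \emph{any} computation, proved by induction on the reactive sequence and relying on the fact that $seq(h\#l^j,\delta)$ captures precisely the externally observable effect of a subtree on the built-in store. The delicate bookkeeping is the consistent fresh renaming $\rho$ of the identifiers and variables of $T_\delta(n')$ and the re-indexing of the repetition superscripts, so that no propagation instance is fired twice in $\delta'$ and the new propagation history remains a valid one.
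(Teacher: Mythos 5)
Your proposal is correct and follows essentially the same route as the paper's own proof: replace $T_\delta(n)$ by the renamed $T_\delta(n')$, use $k=k'\rho$ and $S_{F_\delta}(n)=S_{F_\delta}(n')\rho$ to argue the spliced forest corresponds to a genuine terminating computation (the paper likewise justifies this only by saying the input--output interactions ``follow the same pattern''), and observe that deleting at least one occurrence of the maximal r-equal class forces either the repetitiveness $l$ or, failing that, the degree $dg(F_\delta)$ to drop. The ``main obstacle'' you flag is glossed over in the paper in exactly the same way, so your treatment is at the same level of rigour.
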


Finally we obtain the following result, which is the main result of this paper.

\begin{theorem}[Decidability of termination]
\label{theo:decidability_CHR_1}
Let $P$ be a CHR$_1$($\HLnofunctors$) program an let $G$ be a goal. Let  $u$ be the number of distinct constants used in $P$ and in $G$ and let $w$ be the maximal arity of the CHR constraints which occur in $P$ and in $G$.

$G$ has a terminating computation  in $P$ if and only if there exists a terminating computation  $\delta$ for $G$ in $P$ s.t. $F_\delta$ is $m$-repetitive and $m \leq \frac{2^{w(u+w)(w+3)} -1}{2^{w(u+w)} -1}=L.$
\end{theorem}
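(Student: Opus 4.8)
The plan is to prove the two implications separately. The backward direction is immediate: if there exists a terminating computation $\delta$ for $G$ in $P$ whose forest $F_\delta$ is $m$-repetitive with $m\leq L$, then in particular $G$ has a terminating computation in $P$. All the content is in the forward direction, which I would establish by starting from an \emph{arbitrary} terminating computation for $G$ and repeatedly applying Lemma~\ref{lem:substitute} to shrink its forest until the repetitiveness bound $L$ is met. To organise the iteration I would attach to every terminating computation $\delta$ the pair $(l,dg(F_\delta))$, where $l$ is the integer for which $F_\delta$ is $l$-repetitive, and order these pairs lexicographically on $\mathbb{N}\times\mathbb{N}$. Since this order is well-founded, it suffices to show that whenever $l>L$ the computation can be replaced by a terminating computation for $G$ with a strictly smaller measure.

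The central step, which I expect to be the main obstacle, is to extract from an $l$-repetitive forest with $l>L$ the two nodes $n,n'$ demanded by the hypotheses of Lemma~\ref{lem:substitute}. I would argue by a counting (pigeonhole) argument. By definition of $l$-repetitiveness there is an $l$-repetitive sc-computation $\sigma$ in $F_\delta$ containing a repetition $k\#l^i$ together with exactly $l$ pairwise r-equal repetitions. These repetitions all share the same underlying constraint up to renaming, and since a CHR atom has at most $w$ free variables, each associated sequence $S_{F_\delta}(\cdot)$ becomes, after normalising its constraint to a canonical set of at most $w$ variables, a reactive sequence that is strictly increasing with respect to those variables (this follows from the construction of $\eta$ in Definition~\ref{def:eta}, which removes stuttering steps and fuses connected pairs). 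Lemma~\ref{theo:nummax} bounds the number of such sequences, up to logical equivalence, by exactly $L$. As $l>L$, two of the $l$ repetitions, sitting at nodes $n$ and $n'$, must have coinciding normalised sequences.

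It then remains to convert this coincidence into the exact premises of Lemma~\ref{lem:substitute}. Because an sc-computation is a single root-to-leaf path, the nodes $n$ and $n'$ are comparable, so I may assume $n'$ lies in $T_\delta(n)$. Writing $\rho_1,\rho_2$ for the renamings that send the constraints at $n,n'$ to the common canonical form, the composite $\rho=\rho_2\rho_1^{-1}$ satisfies simultaneously $A_{F_\delta}(n)=A_{F_\delta}(n')\rho$ on the constraint part and $S_{F_\delta}(n)=S_{F_\delta}(n')\rho$ on the sequences. This is precisely what Lemma~\ref{lem:substitute} requires; the delicate point to verify is exactly this simultaneous matching of constraints and sequences under one renaming, together with the qualification of each $S_{F_\delta}(n)$ as a strictly increasing sequence over at most $w$ variables so that the bound of Lemma~\ref{theo:nummax} is applicable.

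Applying Lemma~\ref{lem:substitute} now yields a terminating computation $\delta'$ for $G$ in $P$ that is either $l'$-repetitive with $l'<l$, or $l$-repetitive with $dg(F_{\delta'})<dg(F_\delta)$; in either case the lexicographic measure $(l,dg)$ strictly decreases. Iterating this construction and invoking the well-foundedness of the order, the process necessarily halts, producing a terminating computation $\delta$ whose forest is $m$-repetitive. Since the combinatorial step above shows that the measure can always be strictly decreased as long as $l>L$, the halting computation must satisfy $m\leq L$, which is exactly the assertion of the theorem. Finally I would note that the resulting bound $L$ is computable from $u$ and $w$, so that the equivalence reduces the existence of a terminating computation to a finite search over $m$-repetitive forests with $m\leq L$, giving the decidability stated in the section title.
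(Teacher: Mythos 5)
Your proposal is correct and follows essentially the same route as the paper's proof: a pigeonhole argument via Lemma~\ref{theo:nummax} to find two r-equal nodes on an $l$-repetitive sc-computation with matching sequences, followed by the compression step of Lemma~\ref{lem:substitute}. The only difference is presentational --- you phrase the argument as a well-founded descent on the lexicographic measure $(l, dg(F_\delta))$, whereas the paper takes a minimal counterexample (minimal $m$, then minimal degree) and derives a contradiction; these are interchangeable, and your explicit remarks about the comparability of $n$ and $n'$ on a single path and the simultaneous matching of constraint and sequence under one renaming are points the paper leaves implicit.
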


\begin{proof}
We prove only that if $G$ has a terminating computation  in $P$ then there exists a terminating computation  $\delta$ for $G$ in $P$ s.t. $F_\delta$ is $m$-repetitive and $m \leq L$. The proof of the converse is straightforward and hence it is omitted.

The proof is by contradiction.
Assume $G$ has a terminating computation  $\delta$ in $P$  s.t. $F_\delta$ is $m$-repetitive,
 $m > L$ and there is no  terminating computation  $\delta'$ for $G$ in $P$ such that $F_{\delta'}$ is $m'$-repetitive and $m' < m$.
Moreover, without loss of generality, we can assume that the degree of $F_\delta$ is minimal, namely there is no terminating computation  $\delta'$ for $G$ in $P$ such that $F_{\delta'}$ is $m$-repetitive and
 $dg(F_{\delta'}) < dg (F_{\delta})$.

  Let $\sigma$ be  a $m$-repetitive sc-computation  in $F_{\delta}$. By definition, there exist $m$ repetitions of identified CHR constraints  $k_1\#{l_1}^{i_1},...,k_r\#{l_m}^{i_m}$ in $\sigma$, which are $r$-equal. Therefore there exist renamings $\rho_{s,t}$ such that $k_s= k_t \rho_{s,t}$ for each $s,t \in [1,m]$.

By Lemma \ref{theo:nummax} for each CHR constraint $k$ which occurs in $P$ or in $G$, the set of sequences $s$ which are strictly increasing with respect to $Fv(k)$ (up to logical equivalence) is finite and has cardinality at the most $L$. Then there are two distinct nodes $n$ and $n'$ in $\sigma$ and there exist $s,t \in [1,m]$ such that
$A(n)=k_s\#{l_s}^{i_s}$ and $A(n')=k_t\#{l_t}^{i_t}$ and
$S_{F_\delta}(n)= S_{F_\delta}(n')\rho_{s,t}$.
  Then we have a contradiction, since by Lemma \ref{lem:substitute} this implies that there exists a  terminating computation  $\delta'$ for $G$ in $P$ s.t. either $F_{\delta'}$ is $m'$-repetitive with $m' < m$ or $F_{\delta'}$ is $m$-repetitive and $dg(F_{\delta'}) < dg (F_{\delta})$ and then the thesis.
\end{proof}

As an immediate corollary of the previous theorem we have that the existence of a terminating computation for a goal $G$ in a CHR$_1$($\HLnofunctors$)  program $P$ is decidable. Then we have also the following result, which is stronger than Corollary \ref{theorem:decid2} since here weak encodings are considered.

\begin{corollary}\label{theorem:decid3}
There %
is no weak termination preserving encoding of Turing Machines into  CHR$_1$($\HLnofunctors$).
\end{corollary}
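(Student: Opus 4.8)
The plan is to argue by contradiction, reducing the halting problem for Turing Machines to the decision problem solved by Theorem~\ref{theo:decidability_CHR_1}. The key resource is the decidability of the existence of a terminating computation for a goal in a CHR$_1$($\HLnofunctors$) program, which follows immediately from Theorem~\ref{theo:decidability_CHR_1} and which I may therefore assume.

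First I would make explicit why that decidability holds, since the whole argument rests on it. Theorem~\ref{theo:decidability_CHR_1} reduces the question ``does $G$ have a terminating computation in $P$?'' to the existence of a terminating computation $\delta$ whose forest $F_\delta$ is $m$-repetitive with $m \leq L$, where the bound $L = \frac{2^{w(u+w)(w+3)}-1}{2^{w(u+w)}-1}$ is explicitly computable from $P$ and $G$ (the numbers $u$ and $w$ of constants and of the maximal CHR arity are read off syntactically). The point is that a bound on repetitiveness bounds the whole search space: since the signature is finite and contains no function symbols, there are only finitely many CHR atoms up to renaming, hence only finitely many r-equal classes; along any sc-computation each class may occur at most $L$ times, so every path in an admissible forest has bounded length, and since branching is bounded by the finitely many rules of $P$, the admissible forests have bounded size. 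Thus only finitely many candidate computations need to be inspected, and checking whether a given finite object is a terminating computation of $G$ in $P$ is effective. This yields a decision procedure.

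Next I would carry out the reduction itself. Suppose, for the sake of contradiction, that a weak termination preserving encoding $f$ of Turing Machines into CHR$_1$($\HLnofunctors$) exists. An encoding is an effectively computable function, so given a machine $Z$ and an initial instantaneous description $D$ we can compute $(P,G) = f(Z,D)$. By the defining property of a weak termination preserving encoding, $Z$ started on $D$ terminates if and only if $G$ has at least one terminating computation in $P$. Running the decision procedure of the previous paragraph on $(P,G)$ therefore decides whether $Z$ started on $D$ halts. Since $(Z,D)$ were arbitrary, this would decide the halting problem, contradicting its undecidability. Hence no such encoding exists, a conclusion that also strengthens Corollary~\ref{theorem:decid2} since weak encodings are now excluded as well.

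The main obstacle is not the reduction, which is routine once decidability is in hand, but precisely the step of turning the characterization of Theorem~\ref{theo:decidability_CHR_1} into an actual algorithm: one must ensure that bounded $m$-repetitiveness really bounds the size of the candidate forests, and hence the length of the candidate computations, so that the search is finite and each individual check is effective. This is where the finiteness of the signature and the use of Lemma~\ref{theo:nummax}, which guarantees that strictly increasing reactive sequences are finite up to logical equivalence, become essential.
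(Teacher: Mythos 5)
Your proposal is correct and follows essentially the same route as the paper: Theorem~\ref{theo:decidability_CHR_1} yields decidability of the existence of a terminating computation for a given goal and program, and a weak termination preserving encoding would then reduce the halting problem to this decidable question, a contradiction. The paper states this as an immediate corollary without spelling out the decision procedure; your elaboration of why the bound $L$ on repetitiveness makes the search space finite is a reasonable filling-in of that step, not a different argument.
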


As mentioned at the beginning of this section, the previous result is obtained when considering the abstract semantics $\omega_o$. However it holds also when considering the theoretical semantics $\omega_t$. In fact Lemma \ref{lem:substitute} holds if we require that two r-equal constraints have the same sequence and have fired the same propagation rules. Since the propagation rules are finite Theorem \ref{theo:decidability_CHR_1} is still valid if $m \leq 2^r \cdot \frac{2^{w(u+w)(w+3)} -1}{2^{w(u+w)} -1} $ where $r$ is the number of propagation rules.

\section{Conclusions}\label{sec:conclusions}

We have shown two decidability results for two fragments of CHR($\HLnofunctors$),
the CHR language defined over a signature which does not allow function symbols.
The first result, in Section \ref{sec:rr}, assumes the abstract operational semantics,
while the second one, in Section \ref{sec:sh}, holds for both semantics
(abstract and theoretical).
These results are not immediate. Indeed, CHR($\HLnofunctors$), without further restrictions and with any of the
two semantics, is a Turing complete language  \cite{DBLP:conf/iclp/Sneyers08,DBLP:conf/sofsem/GiustoGM09}.
It remains quite expressive also with our restrictions:
for example, CHR$_1$($\HLnofunctors$), the second fragment that we have considered,
allows an infinite number of different states, hence, for example, it can not be
translated to Petri Nets.

These results imply that range-restricted CHR($\HLnofunctors$) and CHR$_1$($\HLnofunctors$), the two considered fragments, are strictly less expressive than
Turing Machines (and therefore than CHR($\HLnofunctors$)). Also, it seems that
range-restricted CHR($\HLnofunctors$) is more expressive that CHR$_1$($\HLnofunctors$),
since the decidability result for the second language is stronger.
However, a direct result in this sense is left for future work.
Also, we leave to future work to establish a decidability result for
range-restricted CHR($\HLnofunctors$) under an operational semantics which includes a propagation history.
This is not easy, since in this case it appears difficult to apply the theory of well-structured transition systems (the well-quasi-order we have defined does not work).

Several papers have considered the expressive power of CHR in the last few years.
In particular, \citeN{DBLP:conf/iclp/Sneyers08} showed that a further restriction of
CHR$_1$($\HLnofunctors$), which does not allow built-ins in the body of rules
(and which therefore does not allow unification of terms) is not Turing complete.
This result is obtained by translating
 CHR$_1$($\HLnofunctors$) programs (without unification) into propositional CHR and
 using the encoding of propositional CHR intro Petri Nets provided in \cite{betz_petri_nets_chr07}.
The translation to propositional CHR is not possible for the language (with unification)  CHR$_1$($\HLnofunctors$)
that we consider.
\citeN{betz_petri_nets_chr07} also provides a translation of range-restricted CHR($\HLnofunctors$) to Petri nets.
However in this translation, differently from our case, it is also assumed that no unification built-in
can be used in the rules, and only ground goals are considered. Related to this paper is also \cite{DBLP:conf/sofsem/GiustoGM09}, where it is shown that CHR($\HLfunctors$) is Turing complete and that restricting to single-headed rules decreases the computational power of CHR. However, these results are based on the theory of language embedding, developed in the field of concurrency theory to compare Turing complete languages, hence they do not establish any decidability result. Another related study  is \cite{sney_schr_demoen_chr_complexity_08}, where the authors show that it is possible to implement any algorithm in CHR in an efficient way, i.e. with
the best known time and space complexity.   Earlier works by Fr\"uhwirth \cite{DBLP:journals/tplp/FruhwirthA01,DBLP:conf/kr/Fruhwirth02} studied the time complexity of simplification rules for naive implementations of CHR.
In this approach an upper bound on the derivation length,
combined with a worst-case estimate of (the number and cost of) rule application attempts, allows to obtain an upper bound of the time complexity. The aim of  all these works is clearly  different from ours.

\begin{table}
\begin{center}
\renewcommand{\arraystretch}{1.2}
\begin{oldtabular}{p{0.35\textwidth}|c|c|c}
{\it Host language $X$ }  & {\it Operational semantics } & $k=1$ & $k>1$ \\
                                            \cline{3-4}
\cline{1-4}
P (propositional)
                               & abstract   & No & No \\
\cline{1-4}
\multirow{2}{0.35\textwidth}{range-restricted C (constants)
(cf.~Section~\ref{sec:rr})}
& abstract   & No & {\bf No} \\
& & & \\
\cline{1-4}
C (constants), without =
                & any        & No & Yes \\
\cline{1-4}
C (constants)
(cf.~Section~\ref{sec:sh})
                & any                     & {\bf No} & Yes \\
\cline{1-4}
F (functors)           & any       & Yes & Yes \\
\end{oldtabular}
\end{center}
\caption{Termination preserving encoding of Turing Machines into CHR$_k$($X$)}
\label{tbl:overview}
\end{table}

A summary  of the existing results concerning the computational power of several dialects of CHR
is shown in Table \ref{tbl:overview}. In this table, ``no'' and ``yes'' refer to
the existence of a termination preserving encoding of Turing Machines into the considered language, while  ``any'' means theoretical or abstract.
The new results shown in this paper are indicated in a bold font.

\subsubsection*{Acknowledgments.}
We would like to thank the reviewers for their precise and helpful comments. This research was partially supported by the MIUR PRIN 20089M932N project: "Innovative and multi-disciplinary approaches for constraint and preference reasoning".

\end{document}